\documentclass[11pt]{article}

% Change "review" to "final" to generate the final (sometimes called camera-ready) version.
% Change to "preprint" to generate a non-anonymous version with page numbers.
\usepackage[preprint]{acl}

% Standard package includes
\usepackage{times}
\usepackage{latexsym}

% For proper rendering and hyphenation of words containing Latin characters (including in bib files)
\usepackage[T1]{fontenc}
% For Vietnamese characters
% \usepackage[T5]{fontenc}
% See https://www.latex-project.org/help/documentation/encguide.pdf for other character sets

% This assumes your files are encoded as UTF8
\usepackage[utf8]{inputenc}

% This is not strictly necessary, and may be commented out,
% but it will improve the layout of the manuscript,
% and will typically save some space.
\usepackage{microtype}

% This is also not strictly necessary, and may be commented out.
% However, it will improve the aesthetics of text in
% the typewriter font.
\usepackage{inconsolata}

%Including images in your LaTeX document requires adding
%additional package(s)
\usepackage{graphicx}

% If the title and author information does not fit in the area allocated, uncomment the following
%
%\setlength\titlebox{<dim>}
%
% and set <dim> to something 5cm or larger.

%%% RSTLESS specific paper packages:
\usepackage{amsmath,amsfonts,amssymb,amsthm}
\usepackage{multirow, booktabs}
\usepackage{subcaption}
\usepackage{bm} % for bold math
\usepackage{hyperref}

%% RSTLESS specific modifier
\newtheorem{definition}{Definition}
\newtheorem{theorem}{Theorem}

\newtheorem{corollary}{Corollary}

% \AtBeginDocument{\xpatchcmd\proof{$\blacksquare$}{}{}{}}
% %\renewcommand{\qedsymbol}{$\blacksquare$} % as the class wants
% \renewcommand{\qedsymbol}{}

\title{Statistical Foundations of DIME: Risk Estimation for Practical Index Selection}

% Author information can be set in various styles:
% For several authors from the same institution:
% \author{Author 1 \and ... \and Author n \\
%         Address line \\ ... \\ Address line}
% if the names do not fit well on one line use
%         Author 1 \\ {\bf Author 2} \\ ... \\ {\bf Author n} \\
% For authors from different institutions:
% \author{Author 1 \\ Address line \\  ... \\ Address line
%         \And  ... \And
%         Author n \\ Address line \\ ... \\ Address line}
% To start a separate ``row'' of authors use \AND, as in
% \author{Author 1 \\ Address line \\  ... \\ Address line
%         \AND
%         Author 2 \\ Address line \\ ... \\ Address line \And
%         Author 3 \\ Address line \\ ... \\ Address line}

\author{
  \textbf{Giulio D'Erasmo\textsuperscript{1}},
  \textbf{Cesare Campagnano\textsuperscript{3}},
  \textbf{Antonio Mallia\textsuperscript{3}},
\\
  \textbf{Pierpaolo Brutti\textsuperscript{1}},
  \textbf{Nicola Tonellotto\textsuperscript{2}},
  \textbf{Fabrizio Silvestri\textsuperscript{1}}
\\
\\
  \textsuperscript{1}Sapienza University of Rome,
  \textsuperscript{2}University of Pisa,
  \textsuperscript{3}Seltz, San Francisco, US
\\
  \small{
    \textbf{Correspondence:} \href{mailto:g.derasmo@diag.uniroma1.it}{g.derasmo@diag.uniroma1.it}
  }
}

\begin{document}
\maketitle

\begin{abstract}
High-dimensional dense embeddings have become central to modern Information Retrieval, but many dimensions are noisy or redundant. Recently proposed DIME (Dimension IMportance Estimation), provides query-dependent scores to identify informative components of embeddings. DIME relies on a costly grid search to select a priori a dimensionality for all the query corpus’s embeddings. Our work provides a statistically grounded criterion that directly identifies the optimal set of dimensions for each query at inference time. Experiments confirm achieving parity of effectiveness and reduces embedding size by an average of $\sim50\%$ across different models and datasets at inference time\footnote{Code available at \url{https://github.com/giulio-derasmo/RDIME}}.
\end{abstract}

\section{Introduction}
In Information Retrieval (IR) Dense Retrieval (DIR) represents a significant advancement that is driven by the use of high-dimensional embedding spaces to represent queries and documents. DIR models, for example~\citet{devlin-etal-2019-bert} and ~\citet{10.1145/3397271.3401075}, have demonstrated a competitive trade-off between effectiveness and efficiency in numerous benchmarks. However, observations have indicated that not all dimensions are equally useful, and some can even be harmful \cite{kovaleva-etal-2021-bert, puccetti-etal-2022-outlier, takeshita2025randomlyremoving50dimensions}. 

Research in this area has explored compressing encoder embeddings by projecting them onto a lower-dimensional manifold. For instance, one can use Principal Component Analysis (PCA) or autoencoders designed to learn and represent such a manifold \cite{liu-etal-2022-dimension, siciliano2024staticpruningdenseretrieval, 10.1145/3573128.3604896, 10.1145/3626772.3657765}. Although these approaches reduce retrieval performance, they do contribute to improved efficiency.

In contrast, \citet{10.1145/3626772.3657691} theorise the existence of a query-dependent lower-dimensional manifold within the original embedding space, where retrieval performance also increases. Their work introduces Dimension Importance Estimation (DIME), a method that scores each dimension by fusing the embeddings of the query and a relevant document through element-wise multiplication. These scores can then be used to rank dimensions, retaining only those most likely to contribute to effective retrieval. One major drawback of this approach is that the exact number of dimensions to retain cannot be fixed in advance, and previous work often resorts to evaluating a grid of candidate dimensionalities or simply keeping a fixed percentage of dimensions.

The current study addresses this limitation. In Section \ref{RQ1}, we prove that DIME estimates the squared latent signal, that is, the true information need underlying the query, which the embedding aims to capture. Unlike prior approaches, which do not specify the number of dimensions in advance, our Risk Dimension Importance Estimation (RDIME) provides a theoretically grounded criterion for identifying the dimensions to be retained. We further generalize DIME using a kernel-based formulation, which allows for rigorous weighted contributions from relevant documents and thereby improves the estimator's robustness. We validate this criterion in Section \ref{results}.

\section{Background}

\paragraph{DIME} In recent years, \citet{10.1145/3626772.3657691} conjectured the Manifold Clustering Hypothesis, which posits that high-dimensional embeddings lie in a query-dependent low-dimensional manifold where retrieval is more effective. 
To discover this manifold, Dimension Importance Estimation (DIME) assigns importance scores to each embedding dimension using a query-dependent function~$\bm{u}_q$. The importance scores should estimate the contribution of each dimension to retrieving relevant documents for the given query $\bm{q}$. 

They propose different methods for estimating the importance of dimensions in dense retrieval systems. These methods are based on reinforcing the query signal with a specific document $\bm{p}\! \in\! \mathbb{R}^p$ as: 
\begin{equation*}
    \bm{u}_q = \bm{q} \odot \bm{p}.
\end{equation*} 
Two examples are PRF and LLM DIMEs. The first one, inspired by \citet{rocchio1971relevance} Pseudo-Relevance Feedback (PRF), assumes that the top $M$ documents $\bm{d}_1, \ldots, \bm{d}_M$ retrieved by a similarity measure such as BM25 ~\cite{bm25} are likely relevant to the query. We shall refer to them as pseudo-relevant documents. Their centroid is used to reinforce the original signal. The second one, instead, uses a synthetic document generated by a Large Language Model (LLM).

Although PRF can enhance the retrieval effectiveness of IR systems, it suffers from several limitations. \citet{10.1145/3731120.3744579} address this by designing a scoring function that incorporates irrelevant feedback. In parallel work, \citet{10.1145/3726302.3730318} propose a different weighting scheme to avoid the non-tunable parameter~$k$.
%The latter, named Softmax Weighted Centroid (SWC), is represented as:
%\begin{equation*}
%    u^{\operatorname{SWC}}_q = \mathbf{q} \odot \sum_{i=1}^k %\operatorname{softmax}(\mathbf{s})_i \mathbf{d}^{(i)}
%\end{equation*}
%where $\mathbf{s}_i = \mathbf{d}^{(i)}\mathbf{q}$ are the query %documents similarity scores.

\paragraph{Top-$k$ Thresholding} Following DIME, a Top-$k$ thresholding strategy is employed to select the most informative dimensions. Specifically, given the importance scores $\bm{u}_q$,
\begin{equation}\label{topk-thresholding}
    S = \operatorname{Top-}k ( \bm{u}_q) \subseteq \{1, \ldots,p\}
\end{equation}
where $S$ contains the indices of the $k$ dimensions with highest importance score in $\bm{u}_q$.

However, $k$ must be fixed globally for all queries and cannot be tuned using a validation set, limiting the method's adaptability.

\paragraph{Modulation Estimators} The recovery of true signals from noisy observations is a central problem in sparse and redundant representations theory \cite{beran1998modulation, elad2010sparse}. One approach to address this challenge is the modulation estimators framework. In this framework, a random vector $\bm{X}\! \in \!\mathbb{R}^p$ represents the noisy observation, where the dimensions $X_i$ are independent with $\mathbb{E}[X_i] = \xi_i$ and $\text{Var}(X_i)=\sigma^2$ for every $i \in \{1,\ldots,p\}$. Given a modulator function $(f_i)_{i=1}^p \in [0,1]^p$, the goal is to produce a linear estimator with components $f_iX_i$ for the corresponding true signal components $\xi_i$, such that the $\ell^2$ risk,  $\mathbb{E}\big[ \sum_{i=1}^p(\xi_i - f_iX_i)^2 \big]$, is minimised.

Under the assumption that only a few dimensions carry signal information, \citet{donoho1994ideal} proposed the hard-threshold estimator, defined component-wise as:
\begin{equation*}
    %f_iX_i =  \mathbf{1}_{\{|X_i| > \lambda\}} X_i
    %f_iX_i =  \mathbf{1}_{|X_i| > \lambda} X_i
    f_i X_i = \mathbf{1}_S(X_i) X_i ,  \quad S=\{x\in\mathbb{R} \ : \  |x|>\lambda \}
\end{equation*}
for some threshold $\lambda > 0$. 

We shall show that DIME can be interpreted as a modulation estimator of the query's latent signal, which represents a user's information need.

\section{Method}\label{RQ1}

In this section, we formalize the problem of estimating latent information needs from noisy query embeddings. We introduce a hard thresholding approach that retains only the most informative components, providing a practical strategy for denoising queries.

\subsection{IR for Modulation Estimators}

Let $\bm{\theta}\! \in\! \mathbb{R}^p$ denote the query's latent signal, or in other words, the user's information need. The query embedding $\bm{q}\! \in\! \mathbb{R}^p$,  which is obtained through an encoder, can be modeled as
\begin{equation*}
\bm{q} = \bm{\theta} + \varepsilon \bm{z}, \quad \bm{z}
\sim \mathcal{N}(0, I_p),
\end{equation*}
where $\varepsilon$ controls the noise level and $\bm{z}$ is standard Gaussian noise. The query embedding is then interpreted as a noisy sample of the underlying information need.

\begin{theorem}[Hard Thresholding Estimator]
%Given the noisy embedding $\bm{q}$, the optimal estimator of the latent signal $\bm{\theta}$ under $\ell_2$ risk is the hard thresholding estimator: 
Given the noisy embedding $\bm{q}$, the optimal hard-threshold estimator $\hat{\bm{\theta}}(S^\star)$ of the latent signal $\bm{\theta}$ under $\ell_2$ risk is given by: 
\begin{equation*}
\hat{\theta}_i(S^\star) = q_i \mathbf{1}_{S^\star}(i) \quad \forall  i \in \{1, \ldots, p\}, 
\end{equation*}
where 
\begin{equation}\label{Kset}
S^\star = \{i \in \{1, \ldots, p\}\,  | \, \theta^2_i > \varepsilon^2\}.
\end{equation}
\label{th:hte}
\end{theorem}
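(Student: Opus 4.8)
The plan is to exploit the fact that the $\ell_2$ risk of any estimator of the form $\hat{\theta}_i(S) = q_i\mathbf{1}_S(i)$ decomposes additively over coordinates, so that the combinatorial minimization over all subsets $S \subseteq \{1,\ldots,p\}$ collapses into $p$ independent binary decisions. First I would fix an arbitrary $S$ and expand the risk by linearity of expectation,
\[
R(S) \;=\; \mathbb{E}\Big[\sum_{i=1}^p \big(\theta_i - q_i\mathbf{1}_S(i)\big)^2\Big] \;=\; \sum_{i=1}^p \mathbb{E}\big[(\theta_i - q_i\mathbf{1}_S(i))^2\big],
\]
and then evaluate each summand according to whether $i \in S$. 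Writing $q_i = \theta_i + \varepsilon z_i$ with $z_i \sim \mathcal{N}(0,1)$, a coordinate with $i \in S$ contributes $\mathbb{E}[(\theta_i - q_i)^2] = \varepsilon^2\,\mathbb{E}[z_i^2] = \varepsilon^2$ (a pure variance term), while a coordinate with $i \notin S$ contributes $\mathbb{E}[\theta_i^2] = \theta_i^2$ (a pure bias term), so that
\[
R(S) \;=\; \sum_{i \in S}\varepsilon^2 \;+\; \sum_{i \notin S}\theta_i^2 \;=\; \sum_{i=1}^p\Big(\varepsilon^2\mathbf{1}_S(i) + \theta_i^2\big(1-\mathbf{1}_S(i)\big)\Big).
\]

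Next I would observe that the right-hand side is a sum of terms each depending only on the single bit $\mathbf{1}_S(i)\in\{0,1\}$, hence it is minimized by minimizing each term separately: include $i$ in $S$ when $\varepsilon^2 < \theta_i^2$ and exclude it when $\varepsilon^2 > \theta_i^2$ (either choice is optimal when $\theta_i^2 = \varepsilon^2$, and the stated convention breaks the tie by exclusion). This gives exactly $S^\star = \{\,i : \theta_i^2 > \varepsilon^2\,\}$ as a minimizer, with minimal risk $\sum_{i=1}^p \min(\theta_i^2,\varepsilon^2)$, and therefore the optimal hard-threshold estimator is $\hat{\theta}_i(S^\star) = q_i\mathbf{1}_{S^\star}(i)$.

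I do not expect a genuine obstacle here; the only conceptual point worth stressing is that the decoupling of the subset search into per-coordinate decisions relies on both the independence of the $q_i$ across $i$ and the additivity of the squared $\ell_2$ loss — remove either and the argument breaks. It is also worth remarking in the proof that $S^\star$ is an \emph{oracle} set, depending on the unknown $\bm{\theta}$ rather than on the observed $\bm{q}$, so the theorem characterizes the ideal (unattainable) hard-threshold risk in the sense of \citet{donoho1994ideal}; this is precisely the quantity the subsequent data-driven estimator RDIME is designed to approximate.
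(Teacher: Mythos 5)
Your proof is correct and follows essentially the same route as the paper: decompose the $\ell_2$ risk coordinate-wise by linearity of expectation, evaluate each per-coordinate contribution as $\varepsilon^2$ when retained and $\theta_i^2$ when discarded, and minimize term-by-term to obtain $S^\star = \{i : \theta_i^2 > \varepsilon^2\}$ with ideal risk $\sum_i \min(\theta_i^2, \varepsilon^2)$. One small caveat on your closing remark: the per-coordinate decoupling needs only the additivity of the squared loss plus linearity of expectation, \emph{not} independence of the $q_i$ across coordinates --- each summand $\mathbb{E}\big[(\theta_i - q_i\mathbf{1}_S(i))^2\big]$ depends only on the marginal law of $q_i$, so the argument goes through under arbitrary dependence among coordinates (independence matters later, for the DIME estimator of $\bm{\theta}^2$, but not here).
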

\begin{proof}
We want to find the best subset $S^\star$ that minimises the $\ell_2$ risk function:
\begin{align*}
S^\star &= \underset{S}{\mathrm{argmin}}\sum_{i=1}^p \mathbb{E} \big[ \hat{\theta}_i(S) - \theta_i \big]^2 \\
&= \underset{S}{\mathrm{argmin}} \Big\{ \sum_{i \in S} \mathbb{E} \big[ q_i - \theta_i \big]^2 + \sum_{i \notin S} \mathbb{E} \big[\theta_i^2 \big] \Big\} \\
&= \underset{S}{\mathrm{argmin}} \Big\{|S|\varepsilon^2 + \sum_{i \notin S} \theta_i^2 \Big\}
\end{align*}

%% riscrivere
\noindent
Each coordinate $i$ contributes $\varepsilon^2$ to the risk if retained, and $\theta_i^2$ if discarded. Therefore, to minimise the risk, we should include $i$ in $S^\star$ if and only if $\theta^2_i > \varepsilon^2$, which yields \eqref{Kset}.
\end{proof}

In order to be applied, Theorem~\ref{th:hte} requires knowledge of the set $S^\star$ defined in Eq.~\ref{Kset}, which depends on the unknown true signal $\bm{\theta}$. We show in the next section that DIME serves as an estimator of $\bm{\theta}^2$ and it can then be directly applied in Eq.~\ref{Kset}.

\subsection{DIME as an Estimator of Squared Latent Signal}

Let the query $\bm{q}\! \in\! \mathbb{R}^p$ and $\mathcal{D}^M = \{\bm{d}^{(i)}\}_{i=1}^M \subset \mathbb{R}^p$ its top-$M$ (pseudo) relevant documents. According to the Probability Ranking Principle \cite{robertson1977probability}, an IR system should rank documents in decreasing order of relevance to the query. Here, we interpret relevance as inversely related to variance: highly relevant documents are modeled as having lower noise. In comparison, less relevant documents have higher noise. Formally, we model each document $\bm{d}^{(i)}$ as:
\begin{equation*}
\bm{d}^{(i)} = \bm{\theta} + \sigma_i \bm{z}^{(i)}, \quad \bm{z}^{(i)}
\stackrel{i.i.d}{\sim} \mathcal{N}(0, I_p),
\end{equation*}
where $\sigma_1 \leq \sigma_2 \leq \ldots \leq \sigma_k$ reflects increasing uncertainty (or decreasing relevance) as the rank $i$ increases. By construction, higher-ranked documents are more likely to be truly relevant, and thus are distributed near the latent information need $\bm{\theta}$.

%To extend DIMEs into a statistical estimator, we introduce a kernel function $K(\bm{q}, \bm{d}^{(i)})$ to define the weights used to scale a document’s contribution w.r.t. to its similarity to the query. This formulation generalizes standard DIME weights, allowing us to control the bias–variance tradeoff of the resulting estimator.

To formalize DIME as a statistical estimator with a controlled bias-variance trade-off, we introduce a kernel function $K(\bm{q}, \bm{d}^{(i)})$ that measures query documents similarity. This kernel determines how much each document contributes to the final estimate, with more similar documents receiving higher weight. This formulation allows us to generalize standard DIME variants within a unified framework called Kernel DIME.

\begin{definition}[Kernel DIME]
Kernel DIME is the solution of the local kernel-weighted least squares problem:
\begin{equation*}
\min_{\bm{u} \in \mathbb{R}^p} 
\sum_{i=1}^{M} w_i\, 
\big\| \bm{d}^{(i)} \odot \bm{q} - \bm{u} \big\|_2^2.
\end{equation*}
\noindent
where the weights $w_1, \ldots, w_k$ are given by
\begin{equation*}
w_i = \frac{K(\bm{q}, \bm{d}^{(i)})}{\sum_{j=1}^M K(\bm{q}, \bm{d}^{(j)})},
\end{equation*}
and $K: \mathbb{R}^p \times \mathbb{R}^p \to \mathbb{R}_+$ is a kernel function which measures the similarity between the query $\bm{q}$ and the documents $\bm{d}^{(i)}$.
\end{definition} 

In this case, Kernel DIME has a closed-form solution, which generalises DIMEs as:
\begin{equation}\label{kernel_dime}
\bm{u}_q = \bm{q} \odot \sum_{i=1}^{M} w_i\bm{d}^{(i)}. 
\end{equation}

\noindent Several meaningful choices of weights recover existing DIME variants: (i) \text{PRF DIME}, where $w_i = 1/k$; and (ii) \text{SWC DIME}, where for every dimension $i$, $w_i = \operatorname{softmax}(\bm{s})_i$; where $\bm{s}= \left[\bm{d}^{(i)}\bm{q}\right]_{i=1}^{M}$.

\begin{theorem}\label{prop-uniform-dime}
Kernel DIME with uniform weights $w_i = 1/k$ is an unbiased estimator of $\bm{\theta}^2$ for each component.
\end{theorem}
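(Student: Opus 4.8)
The plan is to reduce the claim to a one-line first-moment computation carried out coordinate by coordinate. Fix a dimension $j \in \{1, \ldots, p\}$. By the closed-form expression \eqref{kernel_dime} specialized to uniform weights $w_i = 1/M$, the $j$-th component of Kernel DIME is $u_{q,j} = q_j \cdot \tfrac{1}{M}\sum_{i=1}^M d^{(i)}_j$. I would then substitute the generative models $q_j = \theta_j + \varepsilon z_j$ and $d^{(i)}_j = \theta_j + \sigma_i z^{(i)}_j$, expand the product, and take expectations term by term using linearity.

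The only real content is bookkeeping of which cross terms survive. The terms $\varepsilon\, \theta_j z_j$ and $\sigma_i\, \theta_j z^{(i)}_j$ vanish because the Gaussian noises are centered, so $\mathbb{E}[z_j] = \mathbb{E}[z^{(i)}_j] = 0$. The one term that requires care is $\varepsilon \sigma_i\, z_j z^{(i)}_j$: here I would invoke the (natural, implicit) assumption that the query noise $\bm{z}$ is independent of each document noise $\bm{z}^{(i)}$, so that $\mathbb{E}[z_j z^{(i)}_j] = \mathbb{E}[z_j]\,\mathbb{E}[z^{(i)}_j] = 0$. Hence $\mathbb{E}[q_j d^{(i)}_j] = \theta_j^2$ for every $i$, and averaging over $i = 1, \ldots, M$ gives $\mathbb{E}[u_{q,j}] = \tfrac{1}{M}\sum_{i=1}^M \theta_j^2 = \theta_j^2$, which is componentwise unbiasedness for $\bm{\theta}^2$.

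I do not expect a genuine obstacle: the statement is essentially a second-moment identity for a product of independent, shifted Gaussians. The point that deserves an explicit sentence is the independence structure among the noise vectors — in particular between the query and its pseudo-relevant documents — because the argument collapses if they share a noise source. It is also worth remarking that the variance scales $\sigma_i$, and therefore the relevance ordering $\sigma_1 \le \cdots \le \sigma_M$, are irrelevant to the first moment; they enter only in the variance of the estimator, which is presumably where the non-uniform kernel weights of the other DIME variants become useful.
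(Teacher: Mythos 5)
Your proof is correct and uses the same underlying argument as the paper's: independence of the query and document noise vectors lets $\mathbb{E}[q_j d^{(i)}_j]$ factor as $\mathbb{E}[q_j]\,\mathbb{E}[d^{(i)}_j] = \theta_j^2$, and averaging over $i$ preserves this. The paper simply invokes the factorization of the expectation of a product of independent variables in one step, whereas you expand $q_j d^{(i)}_j$ and kill the cross terms individually; the bookkeeping is slightly different but the proof is the same.
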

%\vspace{-0.5em}
\begin{proof}
Using independence between $\bm{q}$ and $\bm{d}^{(i)}$s, we have:
\begin{align*}
\mathbb{E}[(\bm{u}_{q})_j] 
&= \sum_{i=1}^{M} \dfrac{1}{M} \mathbb{E}\big[d_j^{(i)} q_j\big] \\
&= \sum_{i=1}^{M} \dfrac{1}{M} \mathbb{E} [d_j^{(i)}]\mathbb{E}[q_j] = \theta_j^2.
\end{align*}
\end{proof} 
%\vspace{-0.5em}

%While uniform weighting yields unbiasedness, the choice of the kernel directly controls the bias–variance tradeoff of Kernel DIME as an estimator of $\theta_j^2$. Appendix \ref{app:optimal_weights} describes how to choose Kernel DIME weights to minimize the aggregation of document noise.

\noindent While uniform weighting yields unbiasedness, the choice of kernel influences the bias–variance tradeoff of Kernel DIME as an estimator of $\bm{\theta}^2$. Appendix~\ref{app:optimal_weights} discusses strategies for selecting kernel weights to control document noise aggregation.
It should be noted that we have not established formal theoretical guarantees for all kernel functions. Nevertheless, our experimental results in Section~\ref{results} suggest that several kernel choices perform well empirically.

\begin{corollary}[RDIME: Risk Dimension Importance Estimation]
Given a query $\bm{q}$, we can use the Kernel DIME representation $\bm{u}_q$, as defined in Eq.~\ref{kernel_dime}, for the Hard Thresholding Estimator described in Eq.~\ref{Kset}: 
\begin{equation*}
\begin{cases}
\hat{S} = \{i \in \{1, \ldots, p\}\,  | \, (\bm{u}_q)_i > \hat{\varepsilon}^2\} \\
\hat{\varepsilon}^2 = \frac{1}{p}\sum_{i=1}^p( q_i^2 - (\bm{u}_q)_i).
\end{cases}
\end{equation*}
\end{corollary}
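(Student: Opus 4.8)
The plan is to read the corollary as a plug-in construction built on the two preceding results. Theorem~\ref{th:hte} identifies the oracle set $S^\star = \{i : \theta_i^2 > \varepsilon^2\}$, so the only missing ingredients are estimates of the unknowns $\bm{\theta}^2$ and $\varepsilon^2$. First I would invoke Theorem~\ref{prop-uniform-dime}: with uniform weights the Kernel DIME vector satisfies $\mathbb{E}[(\bm{u}_q)_i] = \theta_i^2$ for every $i$, so $(\bm{u}_q)_i$ is a natural unbiased surrogate for $\theta_i^2$ that can be substituted directly into the defining inequality of $S^\star$.

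Second, I would construct an estimator of the noise level $\varepsilon^2$ from the same data. Under the generative model $\bm{q} = \bm{\theta} + \varepsilon \bm{z}$ with $\bm{z}\sim\mathcal{N}(0,I_p)$, a one-line second-moment computation gives $\mathbb{E}[q_i^2] = \theta_i^2 + \varepsilon^2$ coordinatewise. Combining this with the unbiasedness of $(\bm{u}_q)_i$ yields $\mathbb{E}[\,q_i^2 - (\bm{u}_q)_i\,] = \varepsilon^2$, so each coordinate already furnishes an unbiased (but high-variance) estimate of $\varepsilon^2$. Averaging over the $p$ coordinates — legitimate because $\varepsilon$ is common to all dimensions and the per-coordinate terms are independent under the model — produces the stated $\hat\varepsilon^2 = \tfrac1p\sum_{i=1}^p \big(q_i^2 - (\bm{u}_q)_i\big)$, which is unbiased for $\varepsilon^2$ and concentrates around it by a law-of-large-numbers argument. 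Assembling the pieces, replacing $\theta_i^2$ by $(\bm{u}_q)_i$ and $\varepsilon^2$ by $\hat\varepsilon^2$ in Eq.~\ref{Kset} gives exactly $\hat S = \{i : (\bm{u}_q)_i > \hat\varepsilon^2\}$, which is the claimed data-driven rule.

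The main obstacle is conceptual rather than computational: plugging unbiased estimators into a thresholding rule does not automatically inherit the risk optimality of Theorem~\ref{th:hte}. To upgrade the statement from a heuristic to a genuine guarantee one needs a stability step — for instance a concentration bound showing $\mathbb{P}(\hat S \neq S^\star)$ is small whenever the gaps $|\theta_i^2 - \varepsilon^2|$ are bounded away from zero — which in turn requires controlling both the variance of $(\bm{u}_q)_i$ (hence the dependence on the kernel, cf. Appendix~\ref{app:optimal_weights}) and the fluctuations of $\hat\varepsilon^2$. My proof would therefore present the two moment identities as its core, deriving the form of $\hat S$ and $\hat\varepsilon^2$, and flag the concentration argument as the quantitative refinement left to the discussion of kernel choice.
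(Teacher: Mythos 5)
Your reconstruction is correct and matches the paper's intent: the corollary is stated without proof precisely because it is the plug-in combination of Theorem~\ref{th:hte} (oracle set $S^\star=\{i:\theta_i^2>\varepsilon^2\}$) and Theorem~\ref{prop-uniform-dime} ($\mathbb{E}[(\bm{u}_q)_i]=\theta_i^2$), together with the elementary identity $\mathbb{E}[q_i^2]=\theta_i^2+\varepsilon^2$ under $\bm{q}=\bm{\theta}+\varepsilon\bm{z}$ that yields the unbiased noise estimate $\hat\varepsilon^2$. Your closing caveat --- that substituting unbiased estimators into the thresholding rule does not by itself inherit the $\ell_2$-risk optimality of Theorem~\ref{th:hte} and would require a concentration argument on $\mathbb{P}(\hat S\neq S^\star)$ --- is an accurate identification of a gap the paper itself leaves open (it acknowledges it has not established guarantees beyond the uniform-weight case and defers variance control to Appendix~\ref{app:optimal_weights}), so it strengthens rather than contradicts the intended argument.
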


%Thus, the corollary shows that the hard thresholding method can be implemented in practice through DIME. In fact, previous approaches must explore a grid of candidate values to decide how many dimensions to retain, while our method provides a statistically grounded criterion that directly identifies the optimal set of dimensions. Crucially, \textit{this selection is performed in a query-dependent manner, allowing us to adapt the dimensionality per query rather than relying on a single fixed value for the entire collection} (see \ref{app:box_plots}).
\begin{table*}[ht]
\centering
\resizebox{\textwidth}{!}{%
\begin{tabular}{llccccccccccccccccc}
\toprule
\multirow{2}{*}{Model} & \multirow{2}{*}{Filter} &
\multicolumn{5}{c}{DL '19} &
\multicolumn{5}{c}{DL '20} &
\multicolumn{5}{c}{RB '04} \\
\cmidrule(lr){3-7} \cmidrule(lr){8-12} \cmidrule(lr){13-17}
 & & 0.4 & 0.6 & 0.8 & RDIME & $\Delta (\%)$ & 0.4 & 0.6 & 0.8 & RDIME & $\Delta (\%)$ & 0.4 & 0.6 & 0.8 & RDIME & $\Delta (\%)$ \\
\midrule
\multirow{3}{*}{ANCE} 
 & $\bm{u}^{LLM}$ & 0.570 & 0.660 & \textbf{0.663} & 0.656$^\star$ (0.94) & -1.05 & 0.533 & 0.622 & \textbf{0.658} & 0.651$^\star$ (0.93) & -1.06 & 0.257 & 0.381 & \textbf{0.392} & 0.387$^\star$ (0.94) & -1.27 \\
 & $\bm{u}^{PRF}$ & 0.552 & 0.640 & 0.657 & 0.650$^\star$ (0.93) & -1.06 & 0.541 & 0.616 & 0.650 & 0.645$^\star$ (0.93) & -0.92 & 0.269 & 0.357 & 0.383 & 0.386$^\star$ (0.92) & 0.78 \\
 & $\bm{u}^{SWC}$ & 0.558 & 0.645 & 0.653 & 0.652$^\star$ (0.93) & -0.15  & 0.546 & 0.616 & 0.651 & 0.645$^\star$ (0.93) & -0.77 & 0.265 & 0.361 & 0.381 & 0.384$^\star$ (0.92) & 0.79 \\
 & Baseline       & - & - & - & 0.645 (1.0) & 1.04 & - & - & - & 0.646 (1.0) & -0.19 & - & - & - & 0.384 (1.0) & -0.08 \\ 
 \midrule
\multirow{3}{*}{Contriever} 
 & $\bm{u}^{LLM}$ & 0.736 & 0.733 & \textbf{0.745} & 0.741$^\star$ (0.59) & -0.54 & 0.695 & 0.697 & 0.689 & 0.700$^\star$ (0.63) & 0.43  & \textbf{0.527} & 0.519 & 0.518 & 0.524$^\star$ (0.57) & -0.57 \\
 & $\bm{u}^{PRF}$ & 0.684 & 0.692 & 0.689 & 0.695$^\star$ (0.53) & 0.43  & 0.701 & \textbf{0.704} & 0.693 & 0.687 (0.55) & -2.41 & 0.489 & 0.491 & 0.492 & 0.481 (0.46) & -2.23 \\
 & $\bm{u}^{SWC}$ & 0.678 & 0.683 & 0.683 & 0.682$^\star$ (0.52) & -0.146 & 0.690 & 0.691 & 0.687 & 0.688$^\star$ (0.54) & -0.43 & 0.496 & 0.495 & 0.495 & 0.494$^\star$ (0.44) & -0.40 \\
 & Baseline       & - & - & - & 0.677 (1.0) & 0.87 & - & - & - & 0.666 (1.0) & 3.24 & - & - & - & 0.48 (1.0) & 2.79 \\
 \midrule
 \multirow{3}{*}{TAS-B} 
 & $\bm{u}^{LLM}$ & 0.763 & 0.758 & 0.757 & \textbf{0.766}$^\star$ (0.54) & 0.39  & 0.697 & 0.696 & 0.705 & 0.702$^\star$ (0.60) & -0.42 & 0.466 & 0.469 & 0.467 & \textbf{0.472}$^\star$ (0.54) & 0.64 \\
 & $\bm{u}^{PRF}$ & 0.737 & 0.738 & 0.735 & 0.738$^\star$ (0.51) & 0.00  & 0.705 & 0.712 & 0.706 & 0.707$^\star$ (0.52) & -0.70 & 0.442 & 0.448 & 0.448 & 0.444$^\star$ (0.43) & -0.89 \\
 & $\bm{u}^{SWC}$ & 0.729 & 0.732 & 0.728 & 0.726$^\star$ (0.51) & -0.68 & 0.712 & 0.717 & 0.712 & \textbf{0.720}$^\star$ (0.52) & 0.42  & 0.432 & 0.442 & 0.447 & 0.436 (0.44) & -2.46 \\
& Baseline       & - & - & - & 0.719 (1.0) & 1.02 & - & - & - & 0.685 (1.0) & 5.09 & - & - & - & 0.428 (1.0) & 1.87 \\
\bottomrule
\end{tabular}}
\caption{Comparison of nDCG@10 between RDIME and fixed Top-$k$ thresholding strategies ($k \in \{0.4, 0.6, 0.8\}$), across different query sets and DIR models. $\Delta (\%)$ represents the relative improvement of RDIME over the best performing Top-$k$ strategy. The value in parentheses indicates the average proportion of dimensions retained by RDIME. \textbf{Bold} indicates the best result per configuration. The superscript indicates that our criterion is not statistically significantly different from Top-$k$ thresholding methods.}
%\caption{nDCG@10 across different query sets and DIR models. Each group reports Top-$k$ thresholding results with $k \in \{0.4, 0.6, 0.8\}$, Baseline and RDIME. Superscript  indicates that our criterion is not statistically significantly different from Top-$k$ thresholding methods. Bold indicates the best result per configuration.}
\label{results_table3}
\end{table*}

\noindent This corollary demonstrates that the hard thresholding estimator can be implemented directly from DIME scores. In contrast Eq. \ref{topk-thresholding}, which must explore a grid of candidate dimensionalities to determine how many dimensions to retain, RDIME provides a statistically grounded criterion for directly identifying the optimal dimension set. Crucially, this selection is performed in a query-dependent manner, allowing us to adapt the dimensionality per query rather than relying on a single fixed value for the entire collection (see Appendix~\ref{app:box_plots}).

\section{Experimental Section}
We follow prior work in DIME \cite{10.1145/3731120.3744579, 10.1145/3726302.3730318, 10.1145/3626772.3657691} to set up our experimental evaluation. 

\textbf{Datasets.} We evaluate on three passage retrieval benchmarks: TREC Deep Learning 2019 \citep[DL ’19]{trec19}, TREC Deep Learning 2020 \citep[DL ’20]{trec20}, and the Deep Learning Hard set \citep[DL HD]{trechd}. We assess out-of-domain robustness on TREC Robust 2004 collection \citep[RB ’04]{robust04}. 

\textbf{Models.} Experiments use three 768-dimensional DIR models: ANCE \cite{ance}, Contriever \cite{izacard2021unsupervised}, and TAS-B \cite{tasb}. We shall refer to these models at full dimensionality as Baseline.

\textbf{Evaluation.} We report mean Average Precision (AP) and nDCG@10. Statistical significance is tested using a paired Student’s $t$-test \cite{t-test} (or one-sided Wilcoxon signed-rank \cite{wilcoxon} if non-normal) at level $\alpha=0.05$ using Holm–Bonferroni corrections.

\textbf{DIMEs.} To validate our criterion, we use three DIME variants: PRF-DIME, LLM-DIME, and SWC-DIME. The hyperparameters are kept fixed, since the goal is not to tune performance but to demonstrate that our method correctly identifies the optimal number of dimensions. We set $M=2$ for PRF-DIME and $M=10$ for SWC-DIME.

%\section{Results (OPTION 1)}\label{results}
%
%Table~\ref{results_table} presents the performance of DIME variants across different collections and bi-encoders using our dimension selection criterion. Each entry reports the evaluation metric together with the average percentage of dimensions retained. A detailed comparison with the Top-$k$ thresholding criterion is provided in Appendix~\ref{app:topk}.
%
%Our method adapts dimensionality on a per-query basis, in contrast to existing approaches that fix a single global value for all queries through grid search. This query-dependent selection yields statistically significant improvements over the baseline in the majority of settings.
%
%The results reveal substantial differences in how information is distributed across models. For Contriever and TAS-B, optimal performance is achieved using less than half the original dimensions: Contriever retains between 44.03\% (RB’04) and 62.7\% (DL’20), while TAS-B ranges from 43.16\% (RB’04) to 59.98\% (DL’20). In contrast, ANCE requires over 90\% of components to achieve notable improvements.
%
%These findings demonstrate that our theoretical criterion enables adaptive, query-specific dimensionality reduction without requiring validation-set tuning.

\section{Results}\label{results}

Table~\ref{results_table3} reports nDCG@10 scores for DIME variants across collections and bi-encoders, comparing Top-$k$ thresholding ($k \in \{0.4, 0.6, 0.8\}$, as per existing literature), with our proposed RDIME method. The last column shows the improvement of our method over the best Top-$k$ baseline. The value in parentheses indicates the average proportion of dimensions retained by RDIME.

RDIME adapts dimensionality per query without hyperparameter tuning, whereas Top-$k$ methods (see \ref{topk-thresholding}) require a grid search to set the threshold. Since no validation set is available, this tuning is often done on the test set, and the optimal $k$ varies widely across collections and encoders. For example, $k=0.8$ in one setup and $k=0.4$ in another, leading to unpredictable performance.

Our criterion achieves performance that are not statistically different from the baselines in most settings. Where differences occur, they are small (0.15\% on DL ’19 to 2.46\% on RB ’04), demonstrating robust performance without hyperparameter tuning. Further results are provided in Appendix \ref{app:box_plots} (effects of query-specific dimension selection) and \ref{app:full_comparison} (AP and full dimensionality comparisons).

%The results reveal substantial differences in how information is distributed across models. For Contriever and TAS-B, optimal performance is achieved using less than half the original dimensions: Contriever retains between 44.03\% (RB’04) and 62.7\% (DL’20), while TAS-B ranges from 43.16\% (RB’04) to 59.98\% (DL’20). In contrast, ANCE requires over 90\% of components to achieve notable improvements. We dive deep into this investigation in Appendix \ref{app:box_plots}

Taken together, these findings demonstrate that our theoretical criterion enables adaptive, query-specific dimensionality reduction without validation-set tuning, while maintaining performance competitive with prior method.

\section{Conclusion}

%In this work, we introduced the statistical foundation for DIME in dense retrieval. By treating query embeddings as noisy observations of a latent information need, we transformed DIME from a set of heuristics to a framework based on risk minimisation. Our key finding is that DIME serves as a consistent estimator of the squared latent query signal, leading to an optimal hard-thresholding rule for dimension selection, addressing the critical question of how many dimensions to retain for a query. Additionally, our analysis showed that the best weighting scheme for aggregating relevant documents inversely correlates with their noise variance. This explains why similarity-based methods like SWC perform better than uniform aggregation. Our experiments confirmed that the proposed method improves retrieval effectiveness across multiple models and benchmarks while achieving dimensionality reductions of up to $50\%$ for models like Contriever and TAS-B, enhancing efficiency and highlighting varying information sparsity in different embedding spaces.

In this work, we introduced RDIME, a statistical criterion for query-dependent dimension selection in dense retrieval. Unlike previous approaches, which rely on grid search to set a single global dimensionality and cannot adapt at inference time, our method estimate the optimal dimensions for each query. We tested this framework using several DIMEs and observed consistent improvements in retrieval effectiveness, accompanied by substantial reductions in dimensionality.

This work opens the door to designing new DIMEs variants that leverage kernel functions to better structure the embedding space.

\section{Limitations}

We acknowledge two main limitations. First, our experimental evaluation focuses primarily on Standard DIMEs (PRF and LLM DIMEs), which can be seen as a special case of Kernel DIME. While Section~\ref{results} demonstrates the effectiveness of SWC DIME, the performance of other kernel functions (e.g., Radial Basis Function, Sigmoid) remains to be evaluated empirically.

Second, our theoretical analysis establishes unbiasedness only for uniform weights (Theorem~\ref{prop-uniform-dime}). Extending these guarantees to general kernel functions presents additional challenges. In particular, when weights depend on both the query and retrieved documents, they become random variables rather than fixed constants. Analyzing the statistical properties of Kernel DIME under such random weighting schemes, requires a more sophisticated theoretical treatment and is left for future work. Appendix~\ref{app:optimal_weights} provides preliminary guidance for kernel selection under the Probability Ranking Principle.

\bibliography{custom}

\appendix
\section{Appendix}

\begin{figure*}[ht]
    \centering

    \begin{subfigure}[b]{\textwidth}
        \centering
        \includegraphics[width=\textwidth]{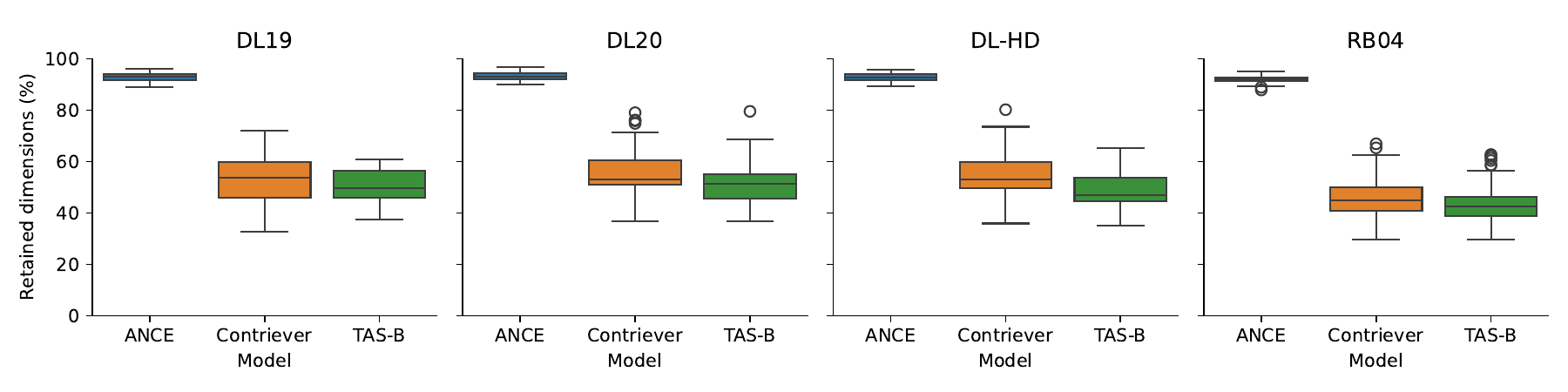}
        \caption{PRF@2 DIME}
        \label{fig:prf2}
    \end{subfigure}

    \vspace{0.5cm}  % optional, adds space between subfigures

    \begin{subfigure}[b]{\textwidth}
        \centering
        \includegraphics[width=\textwidth]{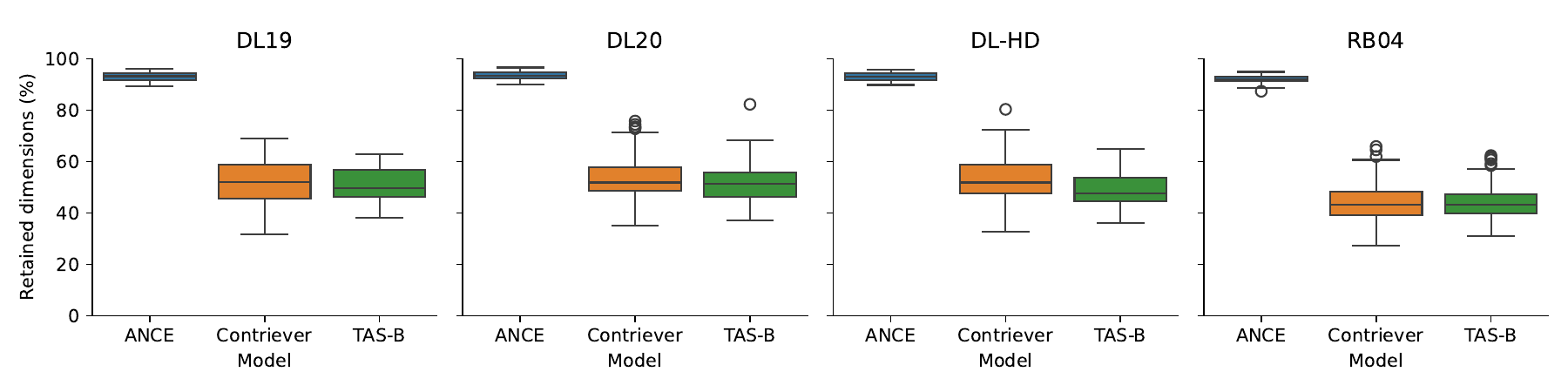}
        \caption{SWC@10 DIME}
        \label{fig:swc10}
    \end{subfigure}

    \vspace{0.5cm}  

    \begin{subfigure}[b]{\textwidth}
        \centering
        \includegraphics[width=\textwidth]{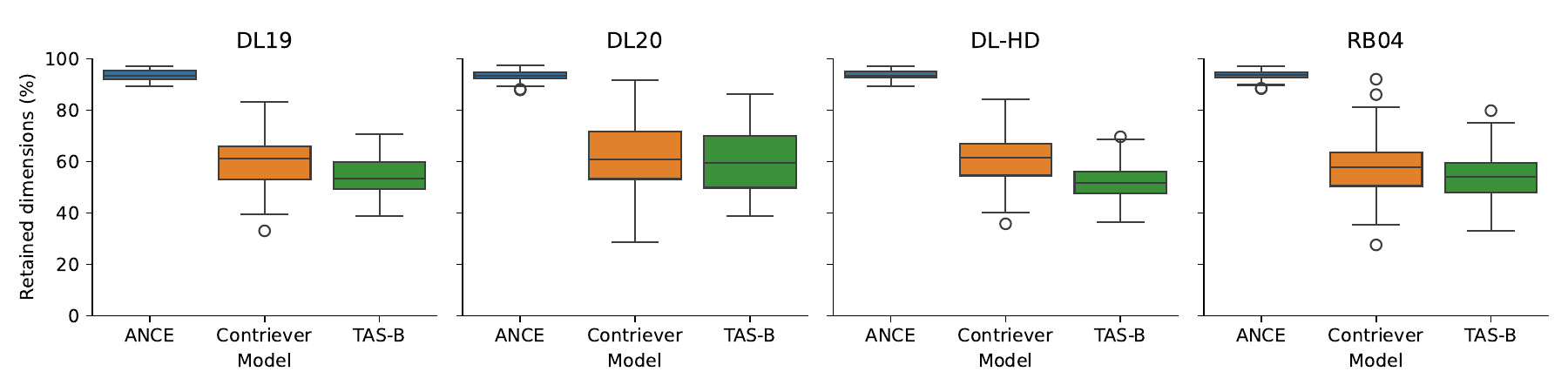}
        \caption{LLM DIME}
        \label{fig:gptfilter}
    \end{subfigure}

    \caption{Percentage of dimensions retained per query across three bi-encoders (ANCE, Contriever, TAS-B) and four collections (DL ’19, DL ’20, DL HD, RB ’04) using our risk-based criterion with three DIME variants. Boxplots show median, interquartile range, and outliers, revealing substantial query-dependent variation.}
    \label{fig:all_filters}
\end{figure*}

\subsection{Optimal weights under heteroskedastic noise in IR}\label{app:optimal_weights}
We aim to show that, as $M \to \infty$, the Kernel DIME estimator is not a consistent estimator, meaning that it does not converge in probability to the true unknown value $\bm{\theta}^2$ as the number of documents grows. Nevertheless, its Mean Squared Error (MSE) can be reduced through an appropriate choice of the weighting scheme, i.e, by selecting an optimal kernel function. For the sake of the proof, we assume that each weight $w_i$ is deterministic.

First, we can express the Kernel DIME explicitly as follows:
\begin{align*}
	(\bm{u}_q)_j &= q_j \sum_{i=1}^M w_i d_j^{(i)} \\
            &= (\theta_j + \varepsilon z_j)\Big[ \sum_{i=1}^M w_i (\theta_j + \sigma_i z^{(i)}_j) \Big] \\
            &= \theta_j^2 +  \theta_j \sum_{i=1}^M \sigma_iw_i z^{(i)}_j + \varepsilon \theta_j z_j \\
            &\quad +\varepsilon z_j \sum_{i=1}^M \sigma_iw_i z^{(i)}_j. 
\end{align*}
This allows us to plug the estimator into the MSE and, following the same reasoning as in Theorem \ref{prop-uniform-dime}, simplify the terms in the MSE, since the weights $w_i$ are constant. Hence,
\begin{align*}
\text{MSE}\Big((\bm{u}_q)_j\Big) &= \mathbb{V}\Big((\bm{u}_q)_j\Big) + \Big(\mathbb{E}[ (\bm{u}_q)_j ] - \theta_j^2 \Big)^2  \\
&= \mathbb{E}\Big[ \Big((\bm{u}_q)_j -  \theta_j^2\Big)^2 \Big] + 0\\
&= \mathbb{E}\Big[\Big(\theta_j \eta_{w,j} + \varepsilon \theta_j z_j + \varepsilon z_j \eta_{w,j} \Big)^2 \Big]
\end{align*}
where $\eta_{w,j} = \sum_{i=1}^M \sigma_iw_i \textbf{z}^{(i)}_j$.
Since $\mathbf{z}_j$ is independent of each $\textbf{z}^{(i)}_j$, we have 
\begin{align*}
\operatorname{cov}(z_j, \eta_{w,j}) &= \sum_{i=1}^M \sigma_iw_i \operatorname{cov}(z_j, z^{(i)}_j) =0.
\end{align*}
Furthermore, because $(\mathbf{z}_j, \eta_{w,j})$ is jointly gaussian, zero covariance implies independence. Consequently, $\textbf{z}_j$ is independent from $\eta_{w,j}$. By the standard property that measurable functions of independent random variables remain independent, it follows that $\textbf{z}_j$ is also independent of $\eta^2_{w,j}$. Therefore, 
\begin{equation*}
\mathbb{E}[z_j\eta^2_{w,j}] = \mathbb{E}[z_j]\mathbb{E}[\eta^2_{w,j}] = 0.
\end{equation*}
An identical argument shows that $z^2_j$ is independent of $\eta_{w,j}$. The same for $z^2_j$ and $\eta^2_{w,j}$.

Expanding the squared term in the MSE and removing the cross product using the argument we prove before,
\begin{align*}
\text{MSE}\Big((\bm{u}_q)_j \Big) &=
\theta_j^2\mathbb{E}[ \eta_{w,j}^2] + \varepsilon^2 \theta_j^2\mathbb{E}[ z_j^2] \\ &\quad \quad + \varepsilon^2 \mathbb{E}[z_j^2\eta^2_{w,j}]  \\
&=  \theta_j^2 \sum_{i=1}^M \sigma_i^2w_i^2 + \varepsilon^2 \theta_j^2 \\
&\quad \quad + \varepsilon^2 \mathbb{E}[z_j^2]\mathbb{E}[\eta^2_{w,j}] \\
&= (\theta_j^2 + \varepsilon^2)\sum_{i=1}^M \sigma_i^2 w_i^2 + \varepsilon^2 \theta_j^2.
\end{align*}

We can see that if we let the kernel weights to be uniform, the series $\lim_{M\to\infty} \sum_{i=1}^M \frac{\sigma^2_i}{M^2}$ does not converge. 

Instead we can minimize with respect to the weights resolving the optimization problem: 
\begin{equation*}
\min_{\bm{w} \in \mathbb{R}^p} f(\bm{w}) = \min_{\bm{w} \in \mathbb{R}^p}\sum_{i=1}^M \sigma_i^2w_i^2
\end{equation*}
restricted on $E = \{ \bm{w} \in \mathbb{R}^M \ |\ w_i \geq 0 \; \; \forall i  \, \, , \sum_{i=1}^M w_i = 1 \}$.
This is a well-posed problem, 
%$f \in C^1$, coercive and $E$ is a closed set, 
so we can derive the solution using the Method of Lagrange Multipliers having 
\begin{equation*}
w_i^{\star} = \dfrac{ \sigma_i^{-2} }{\sum_{j=1}^M \sigma_j^{-2}}.
\end{equation*}
With these weights, we conclude that for large $M$ the $\text{MSE}\Big((\bm{u}_q)_j\Big) \to \varepsilon^2 \theta_j^2$.

The optimal weights $w_i \propto \frac{1}{\sigma_i^2}$ demonstrate that effective DIME estimation requires inverse-variance weighting: documents with lower noise variance should contribute more heavily to the final estimate. However, practical implementations must operate without direct access to the true noise variances $\sigma_i^2$. 

This theoretical result provides a justification for the weighting scheme employed in SWC DIME. In this variant, documents with higher similarity scores to the query receive greater weight. Under the Probability Ranking Principle, higher-ranked (more similar) documents are assumed to have lower noise variance. Therefore, the SWC DIME weighting scheme can be interpreted as an approximation to inverse-variance weighting, where similarity scores serve as a proxy for the unobserved inverse noise variances.

\subsection{Importance of Query-Wise Dimension Selection}\label{app:box_plots}

To observe how important it is to select a unique set of dimensions for each query, Figure \ref{fig:all_filters} presents the distribution of retained dimensions across different bi-encoders and query sets when applying our criterion to each DIME variant. 

The boxplots reveal substantial variation in the optimal proportion of retained dimensions across queries, as reflected by the wide interquartile ranges and presence of outliers across all datasets and models. This demonstrates that query-dependent selection is necessary: while some queries require high dimensionality to preserve effectiveness, others can be adequately represented using approximately half the original dimensions.

The results reveal distinct behaviors across encoder models. ANCE consistently retains the highest proportion of dimensions (typically >90\%) regardless of dataset or DIME variant, suggesting that its representations distribute information more uniformly across the embedding space.  In contrast, Contriever and TAS-B are more prone to dimensional reduction, with the majority of queries retaining between 30\% and 70\% of dimensions. This indicates that these models encode information more sparsely, concentrating signal in a subset of dimensions.

\subsection{Full Comparison}\label{app:full_comparison}

%In Table \ref{app:results_table1} and Table \ref{app:results_table2} we show 
In Table \ref{app:results_table2} we show
the complete results for all the collections, namely DL '19, DL '20, DL HD and RB '04, along with performance using AP e nDCG@10 metrics. We compare Top-$k$ thresholding ($k \in \{0.4, 0.6, 0.8\}$, as per existing literature), with our proposed RDIME method. The last column shows the improvement of our method over the best Top-$k$ baseline. The value in parentheses indicates the average proportion of dimensions retained by RDIME.

Our criterion achieves performances that are not statistically different from baselines in most settings. We notice that we can see a more consistent improvement in AP, proving the robustness of our method. In nDCG@10, where the differences occur, they are small.

In general, RDIME can be successfully used as a replacement for DIME-based methods, with the advantage of being hyperparameter-agnostic.

\begin{table*}[ht]
\centering
\resizebox{\textwidth}{!}{%
\begin{tabular}{llcccccccccccccccccccc}  %20 + 2
\toprule
\multirow{4}{*}{Model} & \multirow{4}{*}{Filter} &
\multicolumn{10}{c}{DL '19} &
\multicolumn{10}{c}{DL '20} \\
\cmidrule(lr){3-7} \cmidrule(lr){7-12} \cmidrule(lr){13-17} \cmidrule(lr){17-22}
& & & & AP &   & & & & nDCG@10 &   & & & & AP &  & & & & nDCG@10 \\
\cmidrule(lr){3-12} \cmidrule(lr){13-22}
 & & 0.4 & 0.6 & 0.8 & RDIME & $\Delta (\%)$ & 0.4 & 0.6 & 0.8 & RDIME & $\Delta (\%)$ & 0.4 & 0.6 & 0.8 & RDIME & $\Delta (\%)$ & 0.4 & 0.6 & 0.8 & RDIME & $\Delta (\%)$  \\
\midrule
\multirow{3}{*}{ANCE} 
 & $\bm{u}^{LLM}$ & 0.267 & 0.351 & \textbf{0.370} & 0.367 (0.94) & -1.0  & 0.570 & 0.660 & \textbf{0.663}$^\star$ & 0.656 (0.94)   & -1.19 & 0.281 & 0.372 & \textbf{0.397} & 0.395 (0.93) & -0.35 & 0.533 & 0.622 & \textbf{0.658} & 0.651$^\star$ (0.93) & -1.0 \\
 & $\bm{u}^{PRF}$ & 0.253 & 0.339 & \textbf{0.370} & 0.368 (0.93) & -0.54 & 0.552 & 0.640 & 0.657 & 0.650$^\star$ (0.93)   & -1.04 & 0.287 & 0.364 & 0.390 & 0.393 (0.93) & 0.92 & 0.541 & 0.616 & 0.651 & 0.645$^\star$ (0.93) & -0.94 \\
 & $\bm{u}^{SWC}$ & 0.255 & 0.340 & \textbf{0.370} & 0.369 (0.93)  & -0.14 & 0.558 & 0.645 & 0.653 & 0.652$^\star$ (0.93) & -0.09 & 0.289 & 0.366 & 0.390 & 0.392 (0.93) & 0.67 & 0.546 & 0.616 & 0.65 & 0.645$^\star$ (0.93) & -0.72 \\
 & Baseline         & - & - & - & 0.361 (1.0)             & 2.22  & - & - & - & 0.645 (1.0)            & 1.04  & - & - & - & 0.392 (1.0) & 0.15 & -   & - & - & 0.646 (1.0) & -0.19 \\
 \midrule
\multirow{3}{*}{Contriever} 
& $\bm{u}^{LLM}$ & 0.523 & 0.521 & 0.519 & \textbf{0.526} (0.59) & 0.65  & 0.736 & 0.733 & \textbf{0.745} & 0.741$^\star$ (0.59) & -0.51  & 0.500 & \textbf{0.505} & 0.501 & 0.503 (0.63) & -0.28   & 0.695 & 0.696 & 0.689 & 0.700$^\star$ (0.63) & 0.52 \\
& $\bm{u}^{PRF}$ & 0.507 & 0.511 & 0.509 & 0.513 (0.53) & 0.35  & 0.684 & 0.692 & 0.689 & 0.695$^\star$ (0.53) & 0.38   & 0.489 & 0.497 & 0.495 & 0.495 (0.55) & -0.42 & 0.701 & \textbf{0.704} & 0.693 & 0.687 (0.55) & -2.4 \\
& $\bm{u}^{SWC}$ & 0.512 & 0.513 & 0.509 & 0.514$^\star$ (0.52) & 0.21  & 0.678 & 0.683 & 0.683 & 0.682$^\star$ (0.52) & -0.06  & 0.497 & 0.495 & 0.495 & 0.497 (0.54) & 0.04  & 0.69 & 0.691 & 0.687 & 0.688 (0.54)$^\star$ & -0.48 \\
& Baseline         & - & - & - & 0.494 (1.0)            & 4.03  & - & - & - & 0.677 (1.0)              & 0.87   & - & - & - & 0.478 (1.0)                      & 3.85 & - & - & - & 0.666 (1.0) & 3.24 \\
\midrule
\multirow{3}{*}{TAS-B} 
& $\bm{u}^{LLM}$ & \textbf{0.527} & 0.52 & 0.514 & 0.526 (0.54)  & -0.25  & 0.763 & 0.758 & 0.757 & \textbf{0.766}$^\star$ (0.54)  & 0.42    & 0.495 & 0.494 & 0.495 & \textbf{0.497} (0.6) & 0.34 & 0.697 & 0.696 & 0.705 & 0.702$^\star$ (0.6) & -0.38 \\
& $\bm{u}^{PRF}$ & 0.509 & 0.512 & 0.507 & 0.511 (0.51) & -0.06  & 0.737 & 0.738 & 0.735 & 0.738$^\star$ (0.51)  & -0.01   & 0.486 & 0.489 & 0.489 & 0.490 (0.52) & 0.12 & 0.705 & 0.712 & 0.706 & 0.707$^\star$ (0.52) & -0.67 \\
& $\bm{u}^{SWC}$ & 0.508 & 0.506 & 0.503 & 0.508 (0.51) & -0.08   & 0.729 & 0.732 & 0.728 & 0.726$^\star$ (0.51) & -0.78   & 0.490 & 0.492 & 0.492 & 0.494 (0.52) & 0.24 & 0.712 & 0.717 & 0.712 & \textbf{0.720}$^\star$ (0.52) & 0.46 \\
& Baseline         & - & - & - & 0.476 (1.0)            & 6.68    & - & - & - & 0.719 (1.0)              & 1.02    & - & - & - & 0.476 (1.0) & 3.81             & - & - & - & 0.685 (1.0) & 5.09 \\
\midrule
%\end{tabular}}
%\caption{Comparison of retrieval effectiveness between RDIME and fixed Top-$k$ thresholding strategies ($k \in \{0.4, 0.6, 0.8\}$), across different query sets and DIR models. $\Delta (\%)$ represents the relative improvement of RDIME over the best performing Top-$k$ strategy. The value in parentheses indicates the average proportion of dimensions retained by RDIME. \textbf{Bold} indicates the best result per configuration. The superscript indicates that our criterion is not statistically significantly different from Top-$k$ thresholding methods.}
%\label{app:results_table1}
%\end{table*}

%\begin{table*}[ht]
%\centering
%\resizebox{\textwidth}{!}{%
%\begin{tabular}{llcccccccccccccccccccc}  %20 + 2
%\toprule
\multirow{4}{*}{Model} & \multirow{4}{*}{Filter} &
\multicolumn{10}{c}{DL HD} &
\multicolumn{10}{c}{RB '04} \\
\cmidrule(lr){3-7} \cmidrule(lr){7-12} \cmidrule(lr){13-17} \cmidrule(lr){17-22}
& & & & AP &   & & & & nDCG@10 &   & & & & AP &  & & & & nDCG@10 \\
\cmidrule(lr){3-12} \cmidrule(lr){13-22}
 & & 0.4 & 0.6 & 0.8 & RDIME & $\Delta (\%)$ & 0.4 & 0.6 & 0.8 & RDIME & $\Delta (\%)$ & 0.4 & 0.6 & 0.8 & RDIME & $\Delta (\%)$ & 0.4 & 0.6 & 0.8 & RDIME & $\Delta (\%)$  \\
\midrule
\multirow{3}{*}{ANCE} 
& $\bm{u}^{LLM}$ & 0.125 & 0.172 & 0.186 & 0.183$^\star$ (0.94) & -1.35  & 0.253 & 0.329 & 0.346 & 0.336$^\star$ (0.94) & -2.66  & 0.082 & 0.137 & 0.148 & \textbf{0.149}$^\star$ (0.94) & 0.61  & 0.257 & 0.381 & \textbf{0.392} & 0.387$^\star$ (0.94) & -1.28 \\
& $\bm{u}^{PRF}$ & 0.128 & 0.176 & 0.18 & 0.183$^\star$ (0.93) & 1.39    & 0.278 & 0.340 & 0.335 & 0.331$^\star$ (0.93) & -2.53   & 0.084 & 0.134 & 0.148 & \textbf{0.149}$^\star$ (0.92) & 0.47 & 0.269 & 0.357 & 0.383 & 0.386$^\star$ (0.92) & 0.57 \\
& $\bm{u}^{SWC}$ & 0.125 & 0.174 & 0.184 & \textbf{0.185}$^\star$ (0.93) & 0.22   & 0.27 & \textbf{0.342} & 0.339 & 0.336$^\star$ (0.93) & -1.87   & 0.083 & 0.135 & 0.147 & \textbf{0.149}$^\star$ (0.92) & 1.57 & 0.265 & 0.361 & 0.381 & 0.384$^\star$ (0.92) & 0.95 \\
& Baseline       & - & - & - & 0.18 (1.0) & 2.38                 & - & - & - & 0.334 (1.0) & 0.51                & - & - & - & 0.146 (1.0) & 1.98       & - & - & - & 0.384 (1.0) & -0.08 \\
 \midrule
\multirow{3}{*}{Contriever} 
& $\bm{u}^{LLM}$ & 0.259 & \textbf{0.262} & 0.261 & 0.259$^\star$ (0.61) & -1.14  & 0.376 & 0.39 & 0.392 & 0.387$^\star$ (0.61) & -1.07   & \textbf{0.263} & 0.261 & 0.259 & \textbf{0.263}$^\star$ (0.57) & 0.08 & \textbf{0.527} & 0.519 & 0.518 & 0.524$^\star$ (0.57) & -0.4 \\
& $\bm{u}^{PRF}$ & 0.254 & 0.252 & 0.251 & 0.257$^\star$ (0.54) & 0.9    & \textbf{0.393} & 0.387 & 0.384 & \textbf{0.393}$^\star$ (0.54) & -0.05  & 0.254 & 0.256 & 0.257 & 0.253$^\star$ (0.46) & -1.4 & 0.489 & 0.491 & 0.492 & 0.481 (0.46) & -2.4 \\
& $\bm{u}^{SWC}$ & 0.253 & 0.248 & 0.249 & 0.25$^\star$ (0.53) & -0.99   & 0.387 & 0.383 & 0.382 & 0.383$^\star$ (0.53) & -0.98  & 0.256 & 0.257 & 0.258 & 0.257$^\star$ (0.44) & -0.43 & 0.496 & 0.495 & 0.495 & 0.494$^\star$ (0.44) & -0.4 \\
& Baseline       & - & - & - & 0.241 (1.0) & 3.73                & - & - & - & 0.375 (1.0) & 2.27                & - & - & - & 0.239 (1.0)              & 7.67 & - & - & - & 0.48 (1.0) & 2.79 \\
\midrule
\multirow{3}{*}{TAS-B} 
& $\bm{u}^{LLM}$ & 0.261 & \textbf{0.265} & 0.260 & 0.264$^\star$ (0.52)  & -0.19  & 0.402 & \textbf{0.408} & 0.395 & 0.405$^\star$ (0.52) & -0.71  & 0.214 & 0.218 & 0.218 & 0.218$^\star$ (0.54) & -0.09 & 0.466 & 0.469 & 0.467 & \textbf{0.472}$^\star$ (0.54) & 0.62 \\
& $\bm{u}^{PRF}$ & 0.243 & 0.250 & 0.256 & 0.238$^\star$ (0.49) & -6.77   & 0.383 & 0.388 & 0.394 & 0.382$^\star$ (0.49) & -3.05  & 0.219 & \textbf{0.224} & 0.222 & 0.220 (0.43)  & -1.74 & 0.442 & 0.448 & 0.448 & 0.444$^\star$ (0.43) & -0.87 \\
& $\bm{u}^{SWC}$ & 0.243 & 0.243 & 0.241 & 0.245$^\star$ (0.49) & 0.78   & 0.389 & 0.388 & 0.381 & 0.393$^\star$ (0.49) & 1.0    & 0.212 & 0.217 & 0.216 & 0.213 (0.44) & -1.8 & 0.432 & 0.442 & 0.447 & 0.436$^\star$ (0.44) & -2.46 \\
& Baseline       & - & - & - & 0.238 (1.0) & 2.85                & - & - & - & 0.374 (1.0) & 5.05                & - & - & - & 0.197 (1.0)              & 7.85 & - & - & - & 0.428 (1.0) & 1.87 \\
\bottomrule
\end{tabular}}
\caption{Comparison of retrieval effectiveness between RDIME and fixed Top-$k$ thresholding strategies ($k \in \{0.4, 0.6, 0.8\}$), across different query sets and DIR models. $\Delta (\%)$ represents the relative improvement of RDIME over the best performing Top-$k$ strategy. The value in parentheses indicates the average proportion of dimensions retained by RDIME. \textbf{Bold} indicates the best result per configuration. The superscript indicates that our criterion is not statistically significantly different from Top-$k$ thresholding methods.}
\label{app:results_table2}
\end{table*}

\end{document}